\documentclass[12pt,a4paper,final]{article}

\usepackage[utf8]{inputenc}
\usepackage[T1]{fontenc}
\usepackage{amsmath,amssymb,amsfonts,amsthm}
\usepackage{geometry}
\usepackage{setspace}
\usepackage{natbib}
\usepackage{algorithm}
\usepackage{algpseudocode}
\usepackage{bm}
\usepackage{graphicx}
\usepackage[hidelinks]{hyperref}
\usepackage{float}
\usepackage{booktabs}

\theoremstyle{plain}
\newtheorem{proposition}{Proposition}
\newtheorem{definition}{Definition}
\newtheorem{lemma}{Lemma}

\newtheorem{example}{Example}

\newcommand{\E}{E}
\newcommand{\R}{\mathbb{R}}
\newcommand{\bx}{\bm{x}}
\newcommand{\bX}{\bm{X}}
\newcommand{\bZ}{\bm{Z}}

\title{Distributionally balanced sampling designs via minimum tactical configurations}
\author{
    Anton Grafström\thanks{Email: anton.grafstrom@slu.se, https://orcid.org/0000-0002-4345-4024} 
    \ and 
    Wilmer Prentius\thanks{Email: wilmer.prentius@slu.se, https://orcid.org/0000-0002-3561-290X} \\[0.3cm]
    \normalsize Department of Forest Resource Management, \\
    \normalsize Swedish University of Agricultural Sciences, Umeå, Sweden
}
\date{\today}

\begin{document}
\maketitle

\begin{abstract}
\noindent 
Distributionally balanced sampling designs are low-discrepancy probability designs obtained by minimizing the expected discrepancy between the auxiliary-variable distribution of a random sample and the target population distribution. Existing constructions rely on circular population sequences, which restrict the design space by forcing samples to be contiguous blocks of a sequence. We propose a new construction based on minimum tactical configurations that removes this topological constraint. The resulting designs are fixed-size, have equal inclusion probabilities, and belong to the class with minimum feasible configuration size. We develop both a simple initialization valid for arbitrary population and sample sizes and a spatial initialization that yields a lower initial expected discrepancy, together with a simulated annealing algorithm for optimization within this class. In simulations and empirical examples, the proposed method outperforms state-of-the-art alternatives in terms of distributional fit, balance, and spatial spread.

\vskip.2cm
\noindent \textbf{Keywords:} Balanced sampling, Energy distance, Spatial sampling, Probability sampling, Variance reduction.
\end{abstract}

\newpage

\section{Introduction}

Distributionally Balanced Designs (DBD) were introduced to construct probability samples whose auxiliary distribution approximates the population distribution in a strong metric sense \citep{GrafstromPrentiusDBD}. The key idea is to define a probability measure on a space of samples such that, when a sample is drawn according to the design, its empirical auxiliary variable distribution closely matches that of the population. This objective is formalized by minimizing the expected energy distance \citep{SzekelyRizzo2013} between the random sample and the population.

Existing DBD implementations rely on arranging the population into a circular sequence and selecting samples as contiguous blocks with random start. While this construction guarantees equal inclusion probabilities, it introduces a strong topological constraint: each unit must form a distributionally representative sample with its immediate neighbors in the sequence. As a result, the admissible design space is restricted to samples that respect this ordering.

In this paper, we remove this constraint by shifting from sequence-based designs to tactical configurations (binary 1-designs) of minimum size, see e.g. \citet[][Ch. 1]{Bea99}. Instead of overlapping windows of a permutation, we construct a finite list of samples such that:
\begin{enumerate}
    \item each sample has size $n$,
    \item each unit appears in the same minimum feasible number of samples,
    \item the expected energy distance between the auxiliary variable distribution of a uniformly drawn sample and the population is (approximately) minimized.
\end{enumerate}

Systematic sampling is well known to achieve the minimum possible support size for fixed-size equal-probability designs, but it represents an extreme solution that only enforces inclusion probabilities \citep{Pea07}. Its defining structure relies on a deterministic ordering of the population, and the resulting samples inherit distributional properties that critically depend on how this ordering aligns with the auxiliary variables. Still, ordered systematic sampling can be viewed as an intuitive way to achieve approximate distributional balance in the univariate case. If the population is ordered by the size of an auxiliary variable, then each sample's empirical distribution will tend to be close to the population distribution. 

In contrast, the proposed DBD-TC framework operates on a large class of designs. Within this class, it explicitly selects designs whose samples approximately minimize the expected distributional discrepancy with respect to a set of auxiliary variables.

Relative to circular DBD \citep{GrafstromPrentiusDBD}, the present work contributes (i) a practical solution to selecting approximately distributionally balanced samples from a larger class of designs, and (ii) two different initialization algorithms valid for arbitrary $(N,n)$. 

The remainder of the paper is organized as follows. Section~\ref{sec:framework} introduces the methodological framework, defines tactical configuration designs, establishes the minimum size of a tactical configuration, and presents initializations that produce a valid tactical configuration for arbitrary population and sample sizes. Section~\ref{sec:algorithm} outlines a stochastic optimization algorithm to refine the design with incremental updates that achieve connectivity of the admissible design space. Section~\ref{sec:simulation} provides examples that illustrate the difference in performance between DBD-TC and some alternative designs. Section~\ref{sec:remarks} concludes with a summary and directions for future research.

\section{Methodological framework} \label{sec:framework}

This section establishes the theoretical foundations. We first define the structure of the tactical configurations and derive the theoretical bound for their size. Then, we introduce the energy distance as the design criterion and describe methods for generating valid initial configurations for arbitrary population and sample sizes.

\subsection{Population and design structure}

Let $U = \{1,2,\dots,N\}$ denote a finite population of size $N$, where each unit $i \in U$ is associated with a vector of auxiliary variables $\bx_i \in \R^p$. We seek a probability sampling design with a fixed sample size $n$ and equal first-order inclusion probabilities $\pi_i = n/N$ for all $i \in U$.

We first define the combinatorial structure from which samples are drawn.

\begin{definition}[Tactical configuration]
A design matrix $\bm{D} = [\bm{d}_1 | \bm{d}_2 |\cdots| \bm{d}_M]$ of samples $\bm{d}_k \in \{0,1\}^N$ is a \emph{tactical configuration} if:
\begin{enumerate}
    \item $\sum_{i=1}^N d_{ik} = n > 0$ for all $k=1,\dots,M$;
    \item $\sum_{k=1}^M d_{ik} = c > 0$ for all $i\in U$,
\end{enumerate}
i.e. all samples have fixed size $n$, and all units appear in exactly $c$ samples.
\end{definition}

Based on this structure, we define the corresponding sampling design.

\begin{definition}[Tactical configuration sampling design]
A sampling design is a \emph{tactical configuration design} if it is constructed by selecting a column of $\bm{D}$ uniformly at random. 
Because $\bm{D}$ may contain duplicate samples, the probability of selecting a unique sample $\bm{d}$ is given by
\[p(\bm{d}) = P(\bm{\delta} = \bm{d}) = \frac{m(\bm{d})}{M},\]
where $\bm{\delta}$ is the random sample and $m(\bm{d})$ is the multiplicity of $\bm{d}$ in $\bm{D}$.
\end{definition}

By definition, in a tactical configuration design, the first- and second-order inclusion probabilities are exactly
\[
\pi_i = \frac{\sum_{k=1}^M d_{ik}}{M}
,\quad
\pi_{ij} = \frac{\sum_{k=1}^M d_{ik} d_{jk}}{M}
.
\]

Throughout the paper, the column vectors $\bm{d}_k$ of a tactical configuration $\bm{D}$ are referred to as samples, and $d_{ij}$ refers to elements of $\bm{D}$, while $d_i$ refer to the $i$-th element of a specific sample vector $\bm{d}$.

\subsection{Minimum tactical configurations}

In general, a tactical configuration may contain duplicate samples (when $c>1$). 
This implies that the size of the design's support (the set of unique samples $\mathcal{S}$ with strictly positive probability) is bounded by the size of the configuration, i.e., $|\mathcal{S}| \leq M$. For computational and practical efficiency, we aim to keep the configuration as small as possible.

\begin{definition}[Minimum tactical configuration]
A tactical configuration $\bm{D}$ is minimum if its size $M$, and thus the multiplicity constant $c$, are minimized. 
A design based on this structure is called a \emph{minimum tactical configuration design}, and $M, c$ are called \emph{minimum tactical configuration parameters}.
\end{definition}

The minimum possible size is determined by the greatest common divisor of $N$ and $n$.

\begin{proposition}[Theoretical bound on configuration size] \label{prop:min_support}
Let $g = \gcd(N,n)$. The size of a tactical configuration must be at least $M = N/g$, with each unit appearing exactly $c = n/g$ times.
\end{proposition}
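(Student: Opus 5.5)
Our approach is to double count the ones in $\bm{D}$ and then use a divisibility argument. Counting the nonzero entries of $\bm{D}$ by columns gives $nM$, since each of the $M$ samples has size $n$. Counting them by rows gives $Nc$, since each of the $N$ units appears $c$ times. Hence every tactical configuration satisfies
\[
nM = Nc .
\]
This identity carries the whole argument. Note that it also recovers $\pi_i = c/M = n/N$.

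Next we write $N = gN'$ and $n = gn'$, where $g=\gcd(N,n)$ and $\gcd(N',n')=1$. Dividing the identity by $g$ gives $n'M = N'c$. So $N'$ divides $n'M$, and because $\gcd(N',n')=1$, Euclid's lemma gives $N' \mid M$. Since $M>0$ (all column sums are $n>0$), this yields $M \ge N' = N/g$. Substituting back, $c = n'M/N'$, so $c$ is a positive multiple of $n'$ and $c \ge n/g$. Moreover, $M$ attains $N/g$ exactly when $c = n/g$, because $M/c = N/n$ is fixed. This justifies the phrase ``$M$, and thus $c$'' in the definition of a minimum tactical configuration.

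The statement as written is only a lower bound, but the paper calls these the minimum tactical configuration parameters, so we should also show the bound is attained. We would use the systematic/circular construction. Place the $N$ units on a cycle and let the $k$-th sample ($k=0,\dots,N/g-1$) be the block of $n$ consecutive units starting at position $kn \bmod N$. The starting positions $0, n, 2n, \dots$ modulo $N$ run through the multiples of $g$, since $n$ generates the subgroup $g\mathbb{Z}_N$ of order $N/g$. Together the $N/g$ blocks wrap around the cycle exactly $(N/g)\cdot n/N = n/g$ times. Hence each unit is covered exactly $n/g$ times and each block has size $n\le N$, which gives a valid configuration with $M=N/g$ and $c=n/g$.

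We expect no real obstacle. The one point needing care is the coprimality step giving $N'\mid M$. For attainment, the wrap-around count must be checked: the concatenated blocks form a contiguous walk of length $Mn = (N/g)\,n$ around a cycle of length $N$, so every unit is hit exactly $Mn/N = n/g$ times.
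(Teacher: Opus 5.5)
Your proof is correct and follows the paper's route. Double counting gives $nM = Nc$, and reducing by $g=\gcd(N,n)$ yields $M=N/g$ and $c=n/g$; your Euclid's-lemma step ($\gcd(N/g,n/g)=1$ forces $N/g \mid M$) supplies the justification the paper leaves implicit when it asserts that dividing by $g$ gives ``the absolute minimum integer values,'' while your attainment argument via consecutive blocks on a cycle is valid but goes beyond this proof, since the paper handles attainment separately in its Lemma (rows are the first $N$ cyclic shifts of a length-$M$ vector with $c$ ones), of which your block construction is a special case after relabelling the units.
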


\begin{proof}
In the $N \times M$ configuration $\bm{D}$, each column sums to $n$, and each row sums to $c$.
Hence $Mn = Nc$, which rearranges to $M/c = N/n$. 
To minimize $M$ and $c$ subject to the constraint that both must be positive integers, we divide $N$ and $n$ by their greatest common divisor $g$. 
This yields the absolute minimum integer values $c = n/g$ and $M = N/g$.
\end{proof}

Circular DBD has support size $N$ via a sequence-based construction, whereas  Proposition \ref{prop:min_support} shows that a minimum tactical configuration has size $N/g$. Therefore, the support size of DBD-TC is at most $N/g$, which is strictly smaller than $N$ when $g>1$, without imposing ordering constraints.

\subsection{Energy distance}

Following \citet{GrafstromPrentiusDBD}, distributional balance is quantified using the energy distance. Let $F_{\bm{d}}$ and $F_U$ be the empirical distributions of $\{\bx_i : d_i = 1\}$ and $\{\bx_i:i\in U\}$, respectively. Also, let $\bX, \bX'$ be independent random vectors distributed according to $F_{\bm{d}}$, 
and let $\bZ, \bZ'$ be independent random vectors distributed according to $F_U$. 
The energy distance between these distributions is defined as
\begin{equation} \label{eq:ed}
\mathcal{E}(F_{\bm{d}},F_U)
=
2\E\|\bX-\bZ\|
-
\E\|\bX-\bX'\|
-
\E\|\bZ-\bZ'\|,
\end{equation}
where $\|\cdot\|$ denotes the Euclidean distance.
Let
\[
\Phi_i = \frac{1}{N}\sum_{k\in U}\|\bx_i-\bx_k\|,
\quad i \in U.
\]
For a sample $\bm{d}$, the computable energy distance derived from \eqref{eq:ed} becomes
\begin{equation}\label{eq:energy-sample}
\mathcal{E}(F_{\bm{d}},F_U)
=
\sum_{i \in U} \Phi_i \left(2\frac{d_i}{n} - \frac{1}{N}\right)
-
\sum_{i \in U} \sum_{j \in U} \frac{d_i}{n}\frac{d_j}{n} \|\bx_i-\bx_j\|
.
\end{equation}

\subsection{Design objective}

Let $\mathcal{P}$ denote the set of all tactical configurations $\bm{D}$ that satisfy the minimum size conditions derived in Proposition~\ref{prop:min_support}. 
The DBD-TC objective is to find a tactical configuration $\bm{D}^{\ast}$ by minimizing the expected energy distance across the configuration:
\[
    \bm{D}^{\ast} \in \underset{\bm{D} \in \mathcal{P}}{\arg\min}
    \quad
    E\left[\mathcal{E}(F_{\bm{\delta}},F_U)\right],
\]
where 
\[
E\left[\mathcal{E}(F_{\bm{\delta}},F_U)\right]
=\bar{\mathcal{E}}(\bm{D})
=\frac{1}{M}\sum_{k=1}^M \mathcal{E}(F_{\bm{d}_k},F_U).
\]
Thus, the target of optimization is the configuration itself. 
Due to the large size of $\mathcal{P}$, we use simulated annealing to find a configuration $\bm{D}^{\circ}$ that produces an expected energy distance close to that of $\bm{D}^{\ast}$.  
As in circular DBD \citep{GrafstromPrentiusDBD}, the use of the energy distance is motivated by its connection to reduced Horvitz–Thompson variance when the empirical auxiliary distribution of a randomly selected sample is close to that of the population. 

\subsection{A simple constructive initialization} 

To perform the optimization, we first require a valid starting point $\bm{D} \in \mathcal{P}$. Constructing a tactical configuration where every sample has size $n$ and every unit appears exactly $c$ times corresponds to a combinatorial design problem. While the space of such designs is large, a valid instance can be generated efficiently by mapping a permutation of the population onto the indices of the sample sets using modular arithmetic. The following randomized cyclic assignment method guarantees a valid initial configuration for any pair $N, n$.

\begin{lemma}
Let $\bm{v} \in \{0,1\}^M$ be a vector such that exactly $c$ entries are 1.
Then a minimum tactical configuration $\bm{D}$ is the matrix whose rows are the first $N$ cyclic shifts of $\bm{v}^\intercal$.
\end{lemma}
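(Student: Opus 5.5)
Here $M=N/g$ and $c=n/g$ are the minimum tactical configuration parameters from Proposition~\ref{prop:min_support}, so $M$ and $c$ are coprime and $N=gM$, $n=gc$. We fix the shift convention: row $i$ of $\bm{D}$ ($i=1,\dots,N$) is $\bm{v}^\intercal$ cyclically shifted by $i-1$ positions, so that
\[
d_{ik} = v_{((k-i) \bmod M)+1}, \qquad i\in U,\; k=1,\dots,M.
\]
A cyclic shift only permutes the coordinates of $\bm{v}$. Each row is therefore a permutation of $\bm{v}^\intercal$ and contains exactly $c$ ones. This gives the row condition $\sum_k d_{ik}=c$ of the tactical configuration definition at once.

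The substance of the argument is the column condition $\sum_{i=1}^N d_{ik}=n$. The key observation is that $d_{ik}$ depends on $i$ only through $i \bmod M$, because a shift by $M$ is the identity. We partition $U$ into the $g$ consecutive blocks
\[
B_r=\{(r-1)M+1,\dots,rM\}, \qquad r=1,\dots,g.
\]
Within a single block, as $i$ ranges over $M$ consecutive integers, the index $((k-i)\bmod M)+1$ runs over every position $1,\dots,M$ exactly once, for fixed $k$. So for every column $k$ and every block $B_r$,
\[
\sum_{i\in B_r} d_{ik}=\sum_{j=1}^M v_j=c .
\]
Summing over the $g$ blocks gives $\sum_{i=1}^N d_{ik}=gc=n$ for every $k$. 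Hence $\bm{D}$ is a tactical configuration with $M=N/g$ columns and row sum $c=n/g$, and by Proposition~\ref{prop:min_support} these values are the minimum, so $\bm{D}$ is a minimum tactical configuration. The condition $n>0$ holds because $c\ge 1$.

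The main obstacle is bookkeeping rather than any real difficulty. First, the shift direction and indexing must be fixed so that the map $i\mapsto ((k-i)\bmod M)+1$ is visibly a bijection on each block. Second, the statement says ``the first $N$ cyclic shifts'' even though only $M$ distinct shifts exist when $N>M$. We handle this by reading the shifts modulo $M$, so that shift $i-1$ and shift $i-1+M$ coincide. This periodicity is exactly what makes each column collect $g$ full periods of $\bm{v}$.

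As a sanity check on consistency, the total count of ones computed by rows is $Nc$ and by columns is $Mn$. These agree since $Nc=gMc=Mn$, matching the identity used in the proof of Proposition~\ref{prop:min_support}.
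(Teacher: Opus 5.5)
Your proof is correct and follows the same approach as the paper. Rows sum to $c$ because each row is a cyclic shift of $\bm{v}$, and each column collects $g$ full periods of $\bm{v}$, so the column sums are $gc=n$. Your explicit indexing $d_{ik}=v_{((k-i)\bmod M)+1}$ and the partition into blocks of $M$ consecutive rows simply make precise the paper's remark that ``a full cycle of $\bm{v}$ is completed after $M$ rows'' and that $g$ such cycles occur.
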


\begin{proof}
Assume $\bm{D} \in \{0,1\}^{N \times M}$ to have the size of a minimum tactical configuration, i.e. $M=N/g$ and $c=n/g$.
By the definition of $\bm{v}$, the rows of $\bm{D}$ sum to $c$.
A full cycle of $\bm{v}$ is completed after $M$ rows, and after $g$ rotations are completed,
the columns of $\bm{D}$ sum to $cg=n$.
\end{proof}

\begin{example}[Constructive initialization]
    For a population of size $N=6$, a sample of size $n=4$ has minimum tactical configuration parameters $M=3, c=2$, as $g=2$.
    An arbitrary cyclic vector is
    \[
    \bm{v} = [1,1,0]^\intercal,
    \]
    which yields the minimum tactical configuration matrix
    \[
    \bm{D} = \begin{bmatrix}
        1 & 1 & 0 \\
        0 & 1 & 1 \\
        1 & 0 & 1 \\
        1 & 1 & 0 \\
        0 & 1 & 1 \\
        1 & 0 & 1 \\
    \end{bmatrix}
    .
    \]
\end{example}

If $\bm{D}$ is a minimum tactical configuration, and $\bm{P}$ is a (random) permutation matrix of size $N\times N$, then $\bm{P}\bm{D}$ is also a minimum tactical configuration.

\subsection{Sampling-based initialization}

While the simple initialization ensures a valid tactical configuration, the resulting configuration typically exhibits a high initial distributional discrepancy. Here, we propose a sampling-based initialization. 
The process, which constructs the configuration $\bm{D} \in \{0,1\}^{N \times M}$ by sequentially filling each sample, is defined in Algorithm~\ref{alg:lpm-init}.

\begin{algorithm}[htb!] 
\caption{Sampling-based initialization of tactical configuration}
\begin{algorithmic}[1]
\State \textbf{Input:}
Population and sample size $N,n$.
Fixed size sample generator $f : [0,1]^N \rightarrow \{0,1\}^N$.
\State \textbf{Compute:}
Minimum tactical configuration parameters $M,c$;
initial budget $\bm{b}_0 = \bm{1}c$.
\For{$k = 1$ to $M$}
    \State Calculate the remaining budget for the units: 
    \State $\bm{b}_k \leftarrow \bm{b}_{k-1} - \bm{d}_{k-1}$, where $\bm{d}_0 = \bm{0}$.
    \State Set inclusion probabilities:
    \State $\bm{p}_k \leftarrow \bm{b}_k / (M-k+1)$.
    \State Select a sample using $f$:
    \State $\bm{d}_k \leftarrow f(\bm{p}_k)$.
\EndFor
\State \Return Tactical configuration $\bm{D} = [\bm{d}_1 | \bm{d}_2 | \cdots | \bm{d}_M]$.
\end{algorithmic} \label{alg:lpm-init}
\end{algorithm}

The process in Algorithm~\ref{alg:lpm-init} is guaranteed for any fixed-size sample generator $f$ that provides the prescribed inclusion probabilities $\bm{p}_k$, as $\bm{b}_k \in \{0,\ldots,M-k+1\}^N$ ensures valid inclusion probabilities, and $\sum_{i \in U} b_{ik} = n(M-k+1)$ implies that $\bm{p}_k$ sums to $n$.

Using the local pivotal method \citep{GrafstromEtAl2012} as the sample generator $f$ guarantees a well-spread sample of exactly $n$ units at each step. This approach yields an initial configuration in which each sample is generated by a spatially balanced procedure. For the simulated annealing process, this "warm start" may reduce the total number of iterations required to locate a near-optimal design.

\section{Algorithm}\label{sec:algorithm}

Algorithm~\ref{alg:optimization} presents the pseudo-code for finding an optimized configuration $\bm{D}^\circ$, starting from an initial state $\bm{D}^{(0)}$. The optimization operates over $\mathcal{P}$ using elementary transitions defined by pairwise exchanges. Specifically, two samples (columns) $\bm{d}_k, \bm{d}_l$ with $k \neq l$ are selected and a swap between units $i$ and $j$, where $d_{ik} = 1$ and $d_{jl} = 1$, is admissible if $d_{il} = 0$ and $d_{jk} = 0$. See Figure~\ref{fig:vector_swap} for an illustration. At each iteration, two random units are selected from two samples. If a swap is admissible, the units are moved according to the simulated annealing procedure \citep[see e.g.][]{Kea83}.

\begin{figure}[htb!]
    \centering
    \includegraphics[width=0.6\linewidth]{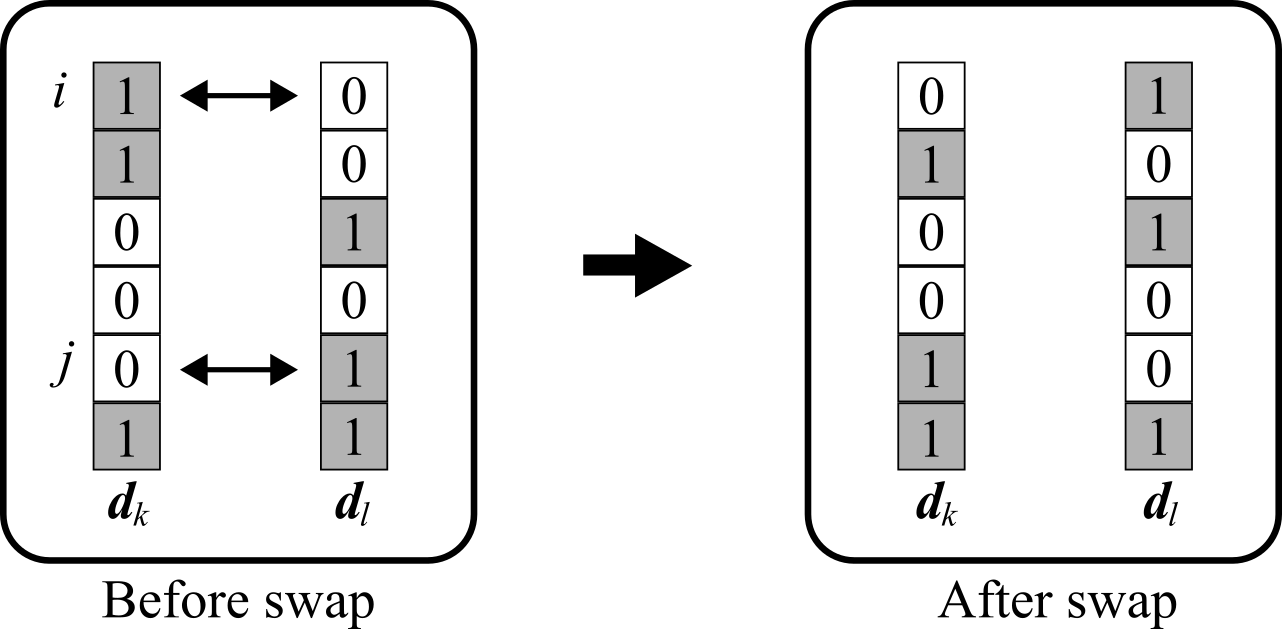}
    \caption{Illustration of an admissible swap between a unit $i$ in $\bm{d}_k$ ($d_{ik}=1$) and a unit $j$ in $\bm{d}_l$  ($d_{jl}=1$). The swap is admissible as $d_{il}=0$ and $d_{jk}=0$ (before the swap).}
    \label{fig:vector_swap}
\end{figure}

\begin{algorithm}[htb!] 
\caption{Optimization through simulated annealing}
\begin{algorithmic}[1] 
\State \textbf{Input:}
    \State Initial state of the configuration $\bm{D}^{(0)}$.
    \State Iterations $R$, initial temp $T^{(0)}$, and cooling rate $\alpha$.
\State \textbf{Compute:} 
    Expected energy distance $\bar{\mathcal{E}}^{(0)} = \bar{\mathcal{E}}(\bm{D}^{(0)})$.
\State \textbf{Initialize:} 
    Best state pair $\bm{D}^{\circ} \leftarrow \bm{D}^{(0)}$; and
    $\bar{\mathcal{E}}^\circ \leftarrow \bar{\mathcal{E}}^{(0)}$.
\For{$r = 1$ to $R$}
    \State $\bm{D}^{(r)} \leftarrow \bm{D}^{(r-1)}$, \quad $\bar{\mathcal{E}}^{(r)} \leftarrow \bar{\mathcal{E}}^{(r-1)}$.
    \State Select two samples $k,l$ with $k \neq l$;
    \State Select units $i,j$ uniformly from the set of units in samples $k,l$ respectively, 
    \State i.e. such that $d^{(r)}_{ik} = d^{(r)}_{jl} = 1$.

    \If{$d^{(r)}_{il} = d^{(r)}_{jk} = 0$}
        \State Swap the units
        \State $d^{(r)}_{ik} \leftarrow 0; d^{(r)}_{il} \leftarrow1$;
        \State $d^{(r)}_{jl} \leftarrow 0; d^{(r)}_{jk} \leftarrow1$;
        \State and compute the expected energy distance $\bar{\mathcal{E}}^{(r)}$.

        \If{$\bar{\mathcal{E}}^{(r)} < \bar{\mathcal{E}}^{\circ}$}
            \State New best state $\bm{D}^\circ \leftarrow \bm{D}^{(r)}$, $\bar{\mathcal{E}}^\circ \leftarrow \bar{\mathcal{E}}^{(r)}$.
            
        \ElsIf{$\bar{\mathcal{E}}^{(r)} \geq \bar{\mathcal{E}}^{(r-1)}$ \textbf{and} $U(0,1) \geq \exp(-(\bar{\mathcal{E}}^{(r)} - \bar{\mathcal{E}}^{(r-1)}) / T^{(r-1)})$}
        \State Reject swap and reset $\bm{D}^{(r)} \leftarrow \bm{D}^{(r-1)}$, $\bar{\mathcal{E}}^{(r)} \leftarrow \bar{\mathcal{E}}^{(r-1)}$.
        \EndIf
    \EndIf
    \State Update temperature $T^{(r)} \leftarrow \alpha T^{(r-1)}$.    
\EndFor
\State \Return Optimized configuration $\bm{D}^\circ$.
\end{algorithmic} \label{alg:optimization}
\end{algorithm}

As a tactical configuration is an $N\times M$ binary incidence matrix with constant sums of rows and columns, and an admissible swap is exactly a $2 \times2$ interchange, the margins are preserved during admissible swaps. \citet{Ryser57} established that the class of such matrices is connected under elementary interchanges, which corresponds to the pairwise swaps defined in our algorithm. This connectivity property ensures that local stochastic optimization methods, such as simulated annealing, can explore the entire admissible design space.

In Appendix~\ref{app:update}, we show that the cost of calculating the expected energy distance in each step is limited by $O(n)$ and provide a remark on parallel updates. In Appendix~\ref{app:large}, we provide some strategies that can be helpful in dealing with large populations. 

A random sample $\bm{\delta}$ is obtained by drawing a sample uniformly at random from the optimized configuration $\bm{D}^{\circ}$.
An efficient implementation of Algorithm~\ref{alg:optimization} is available in the \texttt{rsamplr} R package \citep{P25}. 

\section{Simulation} \label{sec:simulation}
Here, we compare the DBD-TC against some existing designs based on four different quality measures: the mean energy distance, the mean spatial balance \citep{StevensOlsen2004}, the mean local balance \citep{PG24} and the mean balance deviation. Balance deviation (BD) is measured for all auxiliary variables by the Euclidean distance between the HT-estimators and the true totals 
\[
BD(\bm{\delta})=\|\hat{\bX}_{\bm{\delta}}-\bX\|,
\]
where 
\[
\hat{\bX}_{\bm{\delta}}=\sum_{i \in U}\frac{\bx_i}{\pi_i} \delta_i
,\quad
\bX=\sum_{i\in U}\bx_i
.
\]
For all measures, lower values are better. As a baseline, we include simple random sampling. Competitors are the local pivotal method (LPM), the local cube method (LCube) and circular DBD. We use the same set of examples as \citet{GrafstromPrentiusDBD} to illustrate the difference between circular DBD and DBD-TC.

\begin{example}[Decay of the expected energy distance] \label{ex:rate}
    We generated a synthetic population of size $N=1000$ with $p=5$ auxiliary variables. The auxiliary variables $\bx_i = (x_{i1}, \dots, x_{ip})^\top$ were drawn independently from a uniform distribution on $[0,1]$. The sample size was chosen as $n=50$. Figure~\ref{fig:rate} illustrates the decay of the expected energy distance in an optimization run and compares with a circular DBD run. 
\begin{figure}[htb!]
    \centering
    \includegraphics[width=1\linewidth]{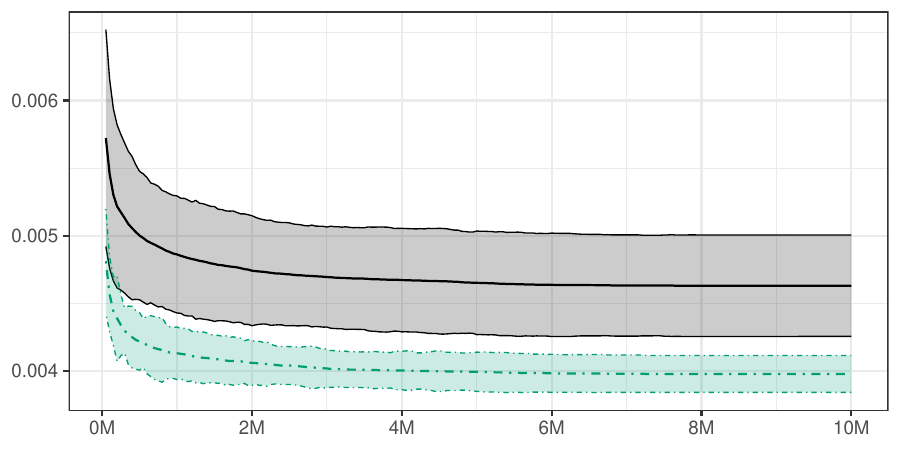}
    \caption{
        Expected energy distance of Circular DBD (solid gray) and DBD-TC (dashed green) at iterations up to 10M, $\pm 2$ standard deviations, for a fixed-sized sample of $n=50$ and $p=5$ auxiliary variables (i.e. $M=20$ samples).
    }
    \label{fig:rate}
\end{figure}
\end{example}

As can be seen from Figure~\ref{fig:rate}, DBD-TC reaches a mean energy distance that seems unattainable with circular DBD.

\begin{example}[Comparisons with some existing designs] \label{ex:design_comparison}
To evaluate the performance of the proposed method across varying degrees of complexity, we generated synthetic populations of size $N=1000$ with $p$ auxiliary variables, where $p \in \{2, 5, 10, 20\}$. The auxiliary variables $\bx_i = (x_{i1}, \dots, x_{ip})^\top$ were drawn independently from a uniform distribution on $[0,1]$. We selected samples of fixed size $n=50$ with equal inclusion probabilities (for $p=5$ we also selected samples of sizes $n=100$ and $n=200$). 

For the competitor methods (SRS, LPM, LCube), we performed 10,000 independent Monte Carlo repetitions using the \texttt{rsamplr} R package. For circular DBD, we evaluated the performance on all $N$ possible samples defined by the optimized sequence (i.e., for the complete design) after $10^7$ iterations. DBD-TC was also optimized using $10^7$ iterations. Table~\ref{tab:results_metrics} summarizes the results for dimensions $p=2, 5, 10, 20$ and sample size $n=50$. The resulting distributions of the different metrics under DBD-TC, circular DBD, LCube and LPM are shown in Figure~\ref{fig:metrics_dist}. Table~\ref{tab:results_samplesize} shows the result for $p=5$ auxiliary variables when increasing the sample size to $n=100$ and $n=200$.

\begin{table}[htb!]
\centering
\caption{Simulation results for $N=1000, n=50$ across dimensions $p \in \{2, 5, 10, 20\}$. The table compares the methods on distributional fit using mean energy distance (mean $\mathcal{E}$), spatial balance (mean SB), local balance (mean LB), and balance deviation of auxiliary variables (mean BD). Lower values indicate better performance.
}
\label{tab:results_metrics}
\vspace{0.2cm}
\begin{tabular}{ll rrrr}
\toprule Dims & Method & $\mathcal{E}$ &  SB &  LB & BD \\
\midrule
2  & SRS   &     0.0099  &     0.3375  &     0.1459  &  49.79  \\
   & LPM   &    0.0015  &     0.0879  &     0.0769  &  10.50  \\
   & LCUBE &    0.0013  &     0.0825  &     0.0751  &   7.97  \\
   & Circular DBD   &    0.0010  &     0.0612  &     0.0646  &   4.88  \\
   & DBD-TC &    0.0007  &     0.0430  &     0.0559  &   1.02  \\
\midrule
5  & SRS   &     0.0167  &     0.2518  &     0.1831  &  84.38  \\
   & LPM   &    0.0069  &     0.1342  &     0.1464  &  36.50  \\
   & LCUBE &    0.0053  &     0.1265  &     0.1429  &  15.07  \\
   & Circular DBD   &    0.0046  &     0.1157  &     0.1391  &  12.44  \\
   & DBD-TC &    0.0040  &     0.1028  &     0.1327  &   4.23  \\
\midrule
10  & SRS   &     0.0241  &     0.3493  &     0.2739  &  122.96  \\
   & LPM   &    0.0145  &     0.2768  &     0.2566  &  74.54  \\
   & LCUBE &    0.0104  &     0.2702  &     0.2551  &  25.79  \\
   & Circular DBD   &    0.0096  &     0.2629  &     0.2529  &  23.41  \\
   & DBD-TC &    0.0086  &     0.2587  &     0.2509  &  11.84  \\
\midrule
20  & SRS   &     0.0343  &     0.5651  &     0.4329  &  175.59  \\
   & LPM   &    0.0252  &     0.5151  &     0.4242  &  129.13  \\
   & LCUBE &    0.0171  &     0.5179  &     0.4239  &  45.15  \\
   & Circular DBD   &    0.0167  &     0.5158  &     0.4233  &  41.76  \\
   & DBD-TC &    0.0151  &     0.4846  &     0.4201  &  25.33  \\
\bottomrule
\end{tabular}
\end{table}
\begin{table}[htb!]
\centering
\caption{Simulation results for $N=1000$ with $p=5$ auxiliary variables at increased sampling fractions ($n=100$ and $n=200$). The table compares the methods on distributional fit using mean energy distance (mean $\mathcal{E}$), spatial balance (mean SB), local balance (mean LB), and balance deviation of auxiliary variables (mean BD). Lower values indicate better performance.
}
\label{tab:results_samplesize}
\vspace{0.2cm}
\begin{tabular}{ll rrrr}
\toprule Size & Method & $\mathcal{E}$ &  SB &  LB & BD \\
\midrule
100  & SRS   &     0.0078  &      0.2560  &      0.1295  &   57.72  \\
     & LPM   &     0.0028  &      0.1400  &      0.1021  &   21.33  \\
     & LCUBE &     0.0021  &      0.1412  &      0.1017  &    7.65  \\
     & Circular DBD   &     0.0019  &      0.1331  &      0.0999  &    6.37  \\
     & DBD-TC &    0.0016  &     0.1210  &     0.0959  &   1.98  \\
\midrule
200  & SRS   &     0.0035  &      0.2781  &      0.0942  &   38.70  \\
     & LPM   &     0.0011  &      0.1590  &      0.0740  &   12.37  \\
     & LCUBE &     0.0008  &      0.1729  &      0.0761  &    3.89  \\
     & Circular DBD   &     0.0007  &      0.1630  &      0.0742  &    3.12  \\
     & DBD-TC &    0.0006  &     0.1569  &     0.0731  &   1.00  \\
\bottomrule
\end{tabular}
\end{table}
\begin{figure}[htb!]
    \centering
    \includegraphics[width=0.95\linewidth]{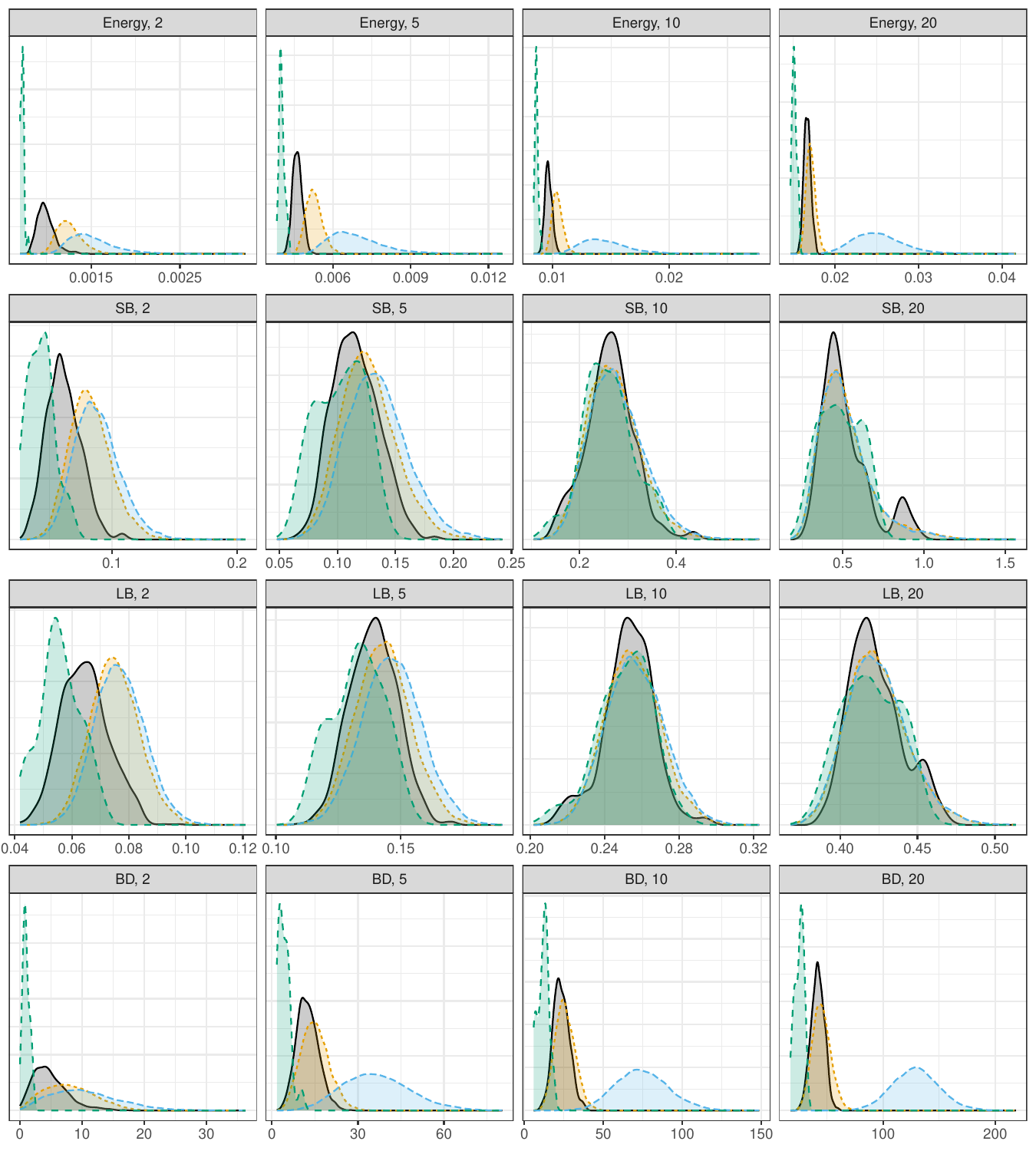}
    \caption{
    Distributions of the different metrics under three designs with sample size $n=50$. Colors represent the designs: green is DBD-TC, gray is circular DBD ($10^7$ iterations), orange is LCube, blue is LPM. First row: energy distance. Second row: the local balance measure. Third row: spatial balance. Fourth row: balance deviation. Columns: number of auxiliary variables. 
    }
    \label{fig:metrics_dist}
\end{figure}

\end{example}

From Example~\ref{ex:design_comparison}, we conclude that the DBD-TC provides the best distributional fit across all dimensions. From Table~\ref{tab:results_samplesize} we note that as the sample size grows, the effect of the improved distributional fit remains. 

\begin{example}[Evaluation using the Meuse dataset] \label{ex:meuse}
We evaluate DBD-TC using the Meuse dataset \citep{gstat}, which consists of $164$ locations in a river flood plain. After removal of two locations with missing values, we used the remaining $N=162$ locations as our population. We compare SRS, the local pivotal method (LPM), the local cube method (LCube) and circular DBD against the proposed DBD-TC for a sample size of $n=20$. All designs, except SRS, use five standardized auxiliary variables: coordinates ($x, y$), elevation (elev), organic matter (org), and copper concentration (Cu). In LCube, all five auxiliary variables are used for both spread and balance. Distances were calculated using standardized auxiliary variables. The target variables are the concentration of zinc (Zn), lead (Pb), and cadmium (Cd).
The circular DBD and DBD-TC were optimized over $10^7$ iterations.

\begin{table}[htb!]
\centering
\caption{Mean energy distance, RRMSE of HT-estimators, and 95\% CI coverage for the Meuse dataset ($N=162, n=20$). For LPM, LCube, circular DBD and DBD-TC we applied the local mean variance estimator with $k=2$ neighbors. A total of $10^7$ iterations was used.
}
\label{tab:meuse_results}
\small
\begin{tabular}{ll rrrrrr rrr}
\toprule
&Method & Zn & Pb & Cd & Cu & elev & org & $\mathcal{E}$ & SB & LB \\
\midrule
RRMSE
&SRS  &   0.168 &   0.154 &   0.234 &   0.124 &   0.055 &   0.099 &   0.126 &   0.363 &   0.408\\
&LPM  &   0.102 &   0.094 &   0.120 &   0.053 &   0.016 &   0.043 &   0.044 &   0.270 &   0.176\\
&LCube  &   0.099 &   0.087 &   0.120 &   0.043 &   0.018 &   0.033 &   0.038 &   0.277 &   0.188\\
&Circular DBD  &   0.088 &   0.077 &   0.118 &   0.028 &   0.014 &   0.021 &   0.032 &   0.265 &   0.165\\
&DBD-TC &   0.084 &   0.071 &   0.083 &   0.011 &   0.004 &   0.007 &   0.026 &   0.253 &   0.150\\

\midrule
Coverage
&SRS  &   0.915 &   0.914 &   0.907 &   0.918 &   0.847 &   0.927\\
&LPM  &   0.945 &   0.960 &   0.928 &   0.960 &   0.990 &   0.973\\
&LCube  &   0.961 &   0.972 &   0.942 &   0.990 &   0.943 &   0.995\\
&Circular DBD  &   0.994 &   0.988 &   0.895 &   1.000 &   0.969 &   1.000\\
&DBD-TC  &   1.000 &   1.000 &   1.000 &   1.000 &   1.000 &   1.000\\
\bottomrule
\end{tabular}
\end{table}

\end{example}

The results for Example~\ref{ex:meuse}, summarized in Table~\ref{tab:meuse_results}, demonstrate that DBD-TC achieves the lowest mean energy distance, indicating superior distributional matching. For the Meuse example, this discrepancy reduction provides substantial efficiency gains. In particular, DBD-TC outperforms the state-of-the-art methods. Finally, the coverage results show that DBD-TC provides conservative statistical inference. Only SRS was consistently below the nominal coverage rate.

\clearpage
\section{Final remarks} \label{sec:remarks}

This paper has extended the distributionally balanced sampling framework by relaxing the topological constraints of circular designs. By framing the problem as a search for optimal tactical configurations, we established that distributionally balanced samples can be constructed without relying on the linear ordering of the population. This characterization achieves the theoretical lower bound on the configuration size, $M = N/\gcd(N,n)$, which never exceeds the population size.

Computationally, the shift from overlapping windows to independent sets offers distinct advantages. The proposed simulated annealing algorithm performs updates in $O(n)$ time and opens up the possibility of parallel updates on large problems. This makes DBD-TC particularly attractive for large-scale applications where the computational cost of minimizing the energy distance would otherwise be prohibitive. 

As for circular DBD, we suggest using an approximate variance estimator based on local means \citep[see e.g.,][]{GrafstromPrentiusDBD}. 

The fast implementation in the R package \texttt{rsamplr} \citep{P25} makes it easy for practitioners to use DBD-TC in applications. It can be used to produce designs that excel in distributional balance.

For more than 20 years, the cube method has been the gold standard for balanced sampling by focusing on the realization of samples that satisfy linear constraints. However, our results demonstrate that realization-based methods are fundamentally limited. The local pivotal method is also realization-based and builds a single sample through sequential random decisions that cannot be changed. By shifting to design construction through support optimization, all samples are jointly optimized. This means that errors can be more evenly distributed across the support. Any design-level objective can be optimized over the support, whereas realization-based methods are limited to constructive rules for individual samples. In that sense, realization-based methods are inherently more restrictive. Thus, for any given design-level objective, support optimization can attain an optimum at least as good as that achievable within any realization-based subclass. This explains why the proposed distributionally balanced designs achieve superior balance and spatial spread in our examples compared to realization-based methods.

Future research includes extension to unequal inclusion probabilities and, if possible, development of a less conservative variance estimator for distributionally balanced sampling.

\iffalse\section{Bibliography}\fi

\appendix

\section{Efficient update of the objective function}\label{app:update}

\subsection{Incremental updates}
Consider a current tactical configuration $\bm{D}$ and an admissible swap between units $u,v$ in samples $\bm{d}_a, \bm{d}_b$, where $d_{ua}=d_{vb}=1$, creating the new configuration $\bm{D}'$.

From \eqref{eq:energy-sample} we get the total energy of the configuration as
\[
\mathcal{E}(\bm{D}) =
\sum_{k=1}^M \mathcal{E}(F_{\bm{d}_k},F_U)
=
\left(2\frac{c}{n} - \frac{M}{N}\right) \sum_{i \in U} \Phi_i
-
\sum_{k=1}^M \sum_{i \in U} \sum_{j \in U} \frac{d_{ik}}{n}\frac{d_{jk}}{n} \|\bx_i-\bx_j\|
.
\]
The change in total energy is
\[
\Delta \mathcal{E} 
= \mathcal{E}(\bm{D}') - \mathcal{E}(\bm{D})
=
\frac{1}{n^2} \sum_{i \in U} \sum_{j \in U}
\left(
d_{ia} d_{ja} - d_{ia}' d_{ja}' 
+
d_{ib} d_{jb} - d_{ib}' d_{jb}' 
\right) 
\|\bx_i-\bx_j\|
.
\]
As $d_{ia} d_{ja} - d_{ia}' d_{ja}' = 0$ for any pair $i,j$ not including $u,v$, and vice versa for $b$, we get
\begin{multline*}
\Delta \mathcal{E}
=
\frac{2}{n^2}
\sum_{i \in U} 
\left(
( d_{ia} - d_{ib}' )
\|\bx_i-\bx_u\| 
+
( d_{ib} - d_{ia}' )
\|\bx_i-\bx_v\|
\right)
=\\=
\frac{2}{n^2}
\sum_{i \in U \setminus \{u,v\}} 
( d_{ia} - d_{ib} )
\left(
\|\bx_i-\bx_u\| 
-
\|\bx_i-\bx_v\|
\right)
,
\end{multline*}
using the fact that $d_{ia} d_{ua} - d_{ia}' d_{ua}' = d_{ia}$ and $d_{ia} d_{va} - d_{ia}' d_{va}' = -d_{ia}'$, and analogously for sample $b$. The final equality follows as for $i\notin\{u,v\}$, we have $d_{ia} = d_{ia}'$ and $d_{ib} = d_{ib}'$. As only units in exactly one of the samples $\bm{d}_a$ or $\bm{d}_b$ affect the change in energy, the total computational cost to evaluate $\Delta{\mathcal{E}}$ is $O(n)$.

\subsection{Parallel updates}
Since the design objective is defined as a sum of sample energies, any swaps involving disjoint pairs of samples are non-interacting in the objective. Consequently, up to $\lfloor M/2 \rfloor$ updates can be performed in parallel within a single iteration, potentially providing substantial computational speedups for designs with large support sizes. 

\section{Sampling of large populations} \label{app:large}
\subsection{Stratification}
For extremely large populations, the DBD-TC framework can be applied within strata defined by auxiliary variables, as in the Block-DBD approach \citep{GrafstromPrentiusDBD}. 
Each stratum is treated independently, allowing local minimum-size configurations, $O(n)$ incremental updates, and parallel swaps. The stratified samples can then be combined to form a global design that maintains distributional representativeness, making the method suitable for very large-scale applications.

\subsection{LPM-based compression for large populations}

When the population size $N$ is very large, the minimum configuration size
$M = N/\gcd(N,n)$ implied by Proposition~\ref{prop:min_support} may still be computationally prohibitive. In such cases, the DBD-TC construction can be preceded by a randomized population compression step that reduces the effective configuration size in a controlled manner. Specifically, one may choose a positive integer $M^\ast$ that is smaller than or equal to $\lfloor N/n \rfloor$, and define a reduced population size $N^\ast = M^\ast n$. Then a subset $U^\ast \subset U$ with $|U^\ast| = N^\ast$ is selected using the local pivotal method (LPM) with equal inclusion probabilities. The empirical distribution $F_{U^\ast}$ provides an approximation of the population distribution $F_U$, and the approximation error becomes small when $N^\ast\gg n$. Conditionally on $U^\ast$, the DBD-TC construction is applied with population size $N^\ast$ and sample size $n$. This procedure provides correct unconditional inclusion probabilities and introduces a tunable accuracy–complexity tradeoff that removes arithmetic discontinuities and makes optimization feasible for very large populations. However, it only produces a conditional design.

\begin{example}[LPM-based compression for large populations]
\label{ex:lpm-reduction}
Consider a population of size $N = 10^6$ and a sample size $n=51$. 
In this case, $\gcd(N,n) = 1$, which implies a minimum configuration size of $M = 10^6$. To reduce the configuration while preserving exact inclusion probabilities, we choose a smaller target size, e.g., $M^\ast = \lfloor N/n \rfloor = 19{,}607$. This corresponds to a reduced population size of $N^\ast = M^\ast n = 999{,}957$. 

By selecting the $N^\ast$ units for $U^\ast$ from $U$ via the local pivotal method, we are effectively rounding the weights $W_i = M^\ast (n/N) $ to integer frequencies $n_i \in \{0,1\}$ such that $\sum n_i = N^\ast$. The DBD-TC is then constructed by partitioning these $999{,}957$ units into $19{,}607$ samples. Unconditionally, each unit $i \in U$ is included in a random sample $S$ with probability:
\[
\Pr(i \in S) = \Pr(i \in U^\ast)\Pr(i\in S \mid i\in U^\ast) = \frac{M^\ast n}{N}\frac{1}{M^\ast} = \frac{n}{N} = \frac{51}{10^6}.
\] 
This produces an excellent approximation to the distributionally balanced design that would have been obtained on the full population, while avoiding the arithmetic discontinuity of the $\gcd$ function and dramatically reducing the computational burden.
\end{example}

\end{document}